\newtcolorbox{mybox}[2][]{%
  enhanced, breakable, colback=white, colframe=black,
  title={#2}, fonttitle=\sffamily\bfseries, boxrule=0.6pt,
  #1
}
\newtheorem{lemma}{Lemma}
\newtheorem{theorem}{Theorem}
\renewcommand{\footnotesize}{\fontsize{7.5pt}{8pt}\selectfont}
\pgfplotsset{compat=1.18}
\title{One-Shot Secure Aggregation: A Hybrid Cryptographic Protocol for Private Federated Learning in IoT }
\author{\IEEEauthorblockN{
Imraul Emmaka, 
Tran Viet Xuan Phuong
}
\IEEEauthorblockA{
Department of Computer Science, University of Arkansas at Little Rock, USA\\
Email: ikemmaka@ualr.edu, ptran@ualr.edu}
}
\date{}
\begin{document}
\maketitle
\begin{abstract}
Federated Learning (FL) offers a promising approach to collaboratively train machine learning models without centralizing raw data, yet its scalability is often throttled by excessive communication overhead. This challenge is magnified in Internet of Things (IoT) environments, where devices face stringent bandwidth, latency, and energy constraints. Conventional secure aggregation protocols, while essential for protecting model updates, frequently require multiple interaction rounds, large payload sizes, and per-client costs rendering them impractical for many edge deployments.

In this work, we present Hyb-Agg, a lightweight and communication-efficient secure aggregation protocol that integrates Multi-Key CKKS (MK-CKKS) homomorphic encryption with Elliptic Curve Diffie-Hellman (ECDH)-based additive masking. Hyb-Agg reduces the secure aggregation process to a single, non-interactive client-to-server transmission per round, ensuring that per-client communication remains constant regardless of the number of participants. This design eliminates partial decryption exchanges, preserves strong privacy under the RLWE, CDH, and random oracle assumptions, and maintains robustness against collusion by the server and up to $N-2$ clients.

We implement and evaluate Hyb-Agg on both high-performance and resource-constrained devices, including a Raspberry Pi 4, demonstrating that it delivers sub-second execution times while achieving a constant communication expansion factor of approximately 12× over plaintext size. By directly addressing the communication bottleneck, Hyb-Agg enables scalable, privacy-preserving federated learning that is practical for real-world IoT deployments.

\end{abstract}
\begin{IEEEkeywords}
Federated Learning, Secure Aggregation, One-Shot Aggregation, MK-CKKS, ECDH, IoT
\end{IEEEkeywords}

\section{Introduction}
%%% The supporting FL for IoT
The proliferation of Internet of Things (IoT) devices has fueled an unprecedented growth in distributed data generation, enabling a wide range of applications from smart healthcare and intelligent transportation to industrial automation. 
Federated Learning (FL) supportively complements IoT \cite{pywzzzz25, jaikumar2025femt} by allowing devices to collaborate for global model training while keeping the original data, thereby reducing direct privacy risks. FL still faces a significant challenge of communication inefficiency, particularly in resource-constrained environments such as IoT networks \cite{li2025vmfl}. In every training round, the IoT devices should exchange the high-dimensional model updates with a central server, which easily downgrades the latency and energy consumption of those devices.

%%%% The Limitations of FL in terms of communication in IoT environment
In fact, a major bottleneck in federated learning lies in its high communication overhead. In each training round, participating clients must transmit full model updates often consisting of millions of parameters to the central server, which then distributes the aggregated model back to all clients. This bidirectional exchange can be prohibitively expensive, particularly when the number of clients or the model size is large. The problem is further exacerbated in IoT environments, where devices operate under strict bandwidth, latency, and energy constraints. Many existing secure aggregation protocols \cite{bonawitz2017ccs,so2022lightsecagg, fereidooni2021safelearn, mansouri2023sok} aggravate this issue by requiring multiple interaction rounds or additional message exchanges for key setup, partial decryption, or dropout recovery, significantly inflating both uplink and downlink traffic. As a result, the cumulative cost of communication can dominate the total training time, limiting the scalability and practicality of FL in real-world deployments. Recently, the work \cite{ma2022privacy} applies the Multi-Key CKKS for data privacy in FL, enabling it in IoT environments. However, \cite{ma2022privacy} requires partial decryption, introducing an extra interaction round: the server requests per-client decryption shares and combines them to recover the final value. Eliminating this stage is desirable to reduce synchronization and bandwidth, enabling a one-shot (single upload per round) protocol that preserves privacy while minimizing communication cost.

\begin{table*}[!ht]
\centering
\caption{The comparison of secure aggregation of Federated Learning using the cryptographic primitive CKKS/MK-CKKS.}
\label{tab:comparison_ckks_transposed}
\setlength{\tabcolsep}{2pt}
\renewcommand{\arraystretch}{1.2}
\footnotesize
\scalebox{0.9}{
\begin{tabular}{|l|c|c|c|c|}
\hline
\textbf{Metric} & \textbf{Hyb-Agg (Ours)} & \textbf{xMK-CKKS\cite{ma2022privacy}} & \textbf{tMK-CKKS\cite{du2023tmkckks}} & \textbf{FedSHE\cite{pan2024fedshe}} \\
\hline
Cryptographic scheme
& Hybrid MK-CKKS + ECDH masking
& Multi-Key CKKS
& Threshold MK-CKKS
& Single-Key CKKS \\
\hline
Rounds per aggregation
& 1 (non-interactive)
& 2 (upload + decrypt shares)
& 2 (upload + threshold decrypt)
& 1 (centralized key) \\
\hline
Comm. overhead (per client/round)
& Constant; \textasciitilde{}12$\times$ expansion
& High; grows with $N$
& Moderate; \textasciitilde{}$O(t)$ per client
& Low; \textasciitilde{}6.6\% of Paillier-FL \\
\hline
Collusion resistance
& Server + $\leq N-2$ clients
& Server + $\leq N-2$ clients
& Server + $\leq t-1$ clients
& Requires trusted key manager \\
\hline
Deployment / hardware
& Raspberry Pi~4 (sub-second)
& Jetson Nano (IoT cluster)
& Simulated (FedML, up to 5k clients)
& Simulated (PC) \\
\hline
\end{tabular}}
\\[0.25em]
\parbox{0.96\linewidth}{\footnotesize\emph{Note:} The comparisons focus on one-shot (non-interactive) aggregation and contrast the cryptographic scheme, rounds per aggregation, communication cost, collusion model, and deployment.}
\end{table*}
In this work, we present Hyb-Agg, a new variant of a secure aggregation protocol that integrates Multi-Key CKKS (MK-CKKS) \cite{cheon2017ckks} homomorphic encryption with Elliptic Curve Computational Diffie-Hellman (ECDH)-based \cite{haakegaard2015elliptic} additive masking to achieve strong privacy guarantees in just three communication rounds. Technically, MK-CKKS enables homomorphic addition over independently encrypted client updates, which also allows the server to compute the global aggregate without accessing individual ciphertexts. This is suitable for FL's aggregation centric workload, as it supports approximate arithmetic on high dimensional model parameters with minimal overhead. Particularly, we employ ECDH-based additive masking, ensuring the partial decryption shares are unlinkable to any client's update, even if the server colludes with multiple clients. Therefore, our design eliminates the need for partial decryption exchanges, enabling all protected data to be transmitted in a single shot per round. This reduces communication overhead while maintaining confidentiality, integrity, and collusion resistance, even when the server collaborates with up to N - 2 malicious clients.

As a result, our cryptographic design of Hyb-Agg is the combination technique of MK-CKKS and ECDH, which not only strengthens security against honest-but-curious and colluding adversaries, but also eliminates multiple decryption, enabling the aggregation process to be completed in one shot, making Hyb-Agg practical for bandwidth-resource-constrained IoT deployments. To demonstrate our advantages among the secure aggregation in FL, we provide detailed comparisons in Table. ~\ref{tab:comparison_ckks_transposed}. For fairness, we restrict attention to schemes built on the CKKS family (single-key CKKS, MK-CKKS, and threshold MK-CKKS), and align metrics across rounds per aggregation, per-client communication, trust/collusion assumptions, and deployment platform.

Our Hyb-Agg combines MK-CKKS with ECDH masking to achieve one-shot (non-interactive) aggregation. As depicted in Table.~\ref{tab:comparison_ckks_transposed}, our per-client payload is constant in $N$ with $\sim 12 \times$ expansion, and it remains secure even if the server colludes with up to $N-2$ clients, while also achieve the same result on the Raspberry Pi 4. xMK-CKKS \cite{ma2022privacy} retains MK-CKKS but requires two rounds, and its per client cost grows with $N$; it offers the same $N-2$ collusion bound, and was evaluated on Jetson Nano. tMK-CKKS \cite{du2023tmkckks} replaces per-client shares with threshold decryption with two rounds, yielding moderate overhead that scales with the threshold $t$ and collusion resistance up to $t-1$; results are mainly from large-scale simulation. FedSHE \cite{pan2024fedshe} uses single key CKKS, enabling one round aggregation with very low expansion ($\sim 6.6 \% $ of Pailier \cite{paillier1999public}-FL), but paying for the cost of trusted key manager and centralized trust. As can be seen in our comparisons, ours achieve the same features of non-interactive one shot aggregation, constant per-client cost, and decentralized collusion resistance of IoT performance, whereas alternatives add an extra round, and introduce $N / t$ dependent overhead to attain lower byte counts.

We summarize our contributions as follows:
\begin{itemize}
    \item We present a lightweight, non-interactive secure aggregation Hyb-Agg scheme combining MK-CKKS homomorphic encryption with ECDH-based masking, optimized for IoT-FL deployments.
    \item We prove our proposed scheme to achieve confidentiality, integrity, and collusion resistance under standard cryptographic assumptions.
    \item Our empirical benchmarks on both powerful and IoT devices demonstrate the scalability and feasibility of the proposed protocol. We also analyze trade-offs in computation, communication, and storage, highlighting Hyb-Agg’s suitability for bandwidth-constrained, heterogeneous IoT networks.
\end{itemize}
\section{Related Work}
 We focus on the privacy challenges of Federated Learning from the existing works, which mitigate the data leakage while preventing multiple attacks. Additionally, we also investigate all the works of data privacy for FL in the IoT environment.
Firstly, we present the privacy risks in the current FL. In fact, the conventional FL circumvents direct data sharing, the transmission of model updates inherently poses significant privacy risks. Recent research \cite{zhu2019deep,geiping2020inverting} has demonstrated that these parameter updates can inadvertently leak sensitive information regarding the local datasets. Specific privacy threats include: 
(i) Model Inversion Attacks: adversaries can exploit the correlation between the model gradients (or parameter updates) and the underlying data samples. For example, Zhu et al. \cite{zhu2019deep} showed successful reconstruction of private training images from shared gradients, posing severe privacy concerns for clients in FL scenarios. 
(ii) Membership Inference Attacks:
Shokri et al. \cite{shokri2017membership} and Nasr et al. \cite{nasr2019comprehensive} demonstrated that adversaries could infer whether a particular data sample was part of a client's training dataset, directly violating data confidentiality. 
(iii) Gradient Leakage: the fine-grained nature of gradient updates allows attackers to reconstruct raw or partial data points via optimization-based inversion methods. Geiping et al. \cite{geiping2020inverting} validated that even small parameter updates could suffice for precise reconstruction of training examples, emphasizing vulnerabilities inherent in baseline FL. These attacks demonstrate that simply avoiding the exchange of raw data is insufficient to guarantee robust privacy protection, highlighting the urgency of incorporating more advanced cryptographic protocols into the federated learning paradigm.
\subsection{Privacy Threats in Federated Learning}
Federated learning (FL) removes the need to centralize raw data, but exchanging model parameters still risks leakage of private information. A line of work shows that gradients or parameter deltas can be inverted to reconstruct sensitive training samples (``deep leakage from gradients'') or otherwise expose features of the local datasets~\cite{zhu2019deep,geiping2020inverting}. In parallel, membership inference attacks can reveal whether a particular record contributed to training~\cite{shokri2017membership,nasr2019comprehensive}. These results establish that avoiding raw data transfer is insufficient; FL requires privacy-preserving protocols that limit the server’s view to \emph{only} aggregates of many clients’ updates.

\subsection{Cryptographic Secure Aggregation Protocols}
\textbf{Classical masking.}
Bonawitz \emph{et al.} introduced \emph{SecAgg}, the first practical secure aggregation for cross-device FL~\cite{bonawitz2017ccs}. Clients establish pairwise secrets (e.g., via Diffie--Hellman) and mask their updates so that masks cancel when summed; dropout resilience is achieved using Shamir secret sharing. SecAgg offers strong privacy in the honest-but-curious model and robustness to dropouts, but requires multiple message rounds (about five per FL round) and per-client communication that scales linearly in the number of participants, which can be challenging for large IoT cohorts.

\textbf{Faster secret sharing.}
\emph{FastSecAgg} reduces computation and round complexity using an FFT-based multi-secret sharing construction while preserving information-theoretic privacy and tolerance to a bounded fraction of (possibly adaptive) dropouts~\cite{kadhe2020fastsecagg}. Communication remains on the same order as SecAgg but with substantially lower latency and CPU overhead in practice.

\textbf{One-shot aggregate-mask recovery.}
\emph{LightSecAgg} encodes client masks so the server can reconstruct the \emph{sum of masks} in one step, avoiding per-dropout seed recovery~\cite{so2022lightsecagg}. This yields markedly lower overhead under high dropout, reduces active rounds (down to $\sim$3 in synchronous settings), and---distinctively---extends to asynchronous FL, where clients upload at different times.

\textbf{Seed-homomorphic PRGs.}
\emph{SASH} replaces dense pairwise seeding with a seed-homomorphic PRG so that mask combination is algebraic~\cite{liu2022sash}. The key benefit is that per-client overhead is linear in model size but \emph{independent of the number of clients}, and dropout handling requires no extra communication. Empirical results show large efficiency gains over SecAgg at scale.

\textbf{Near-linear scalability via grouping/coding.}
\emph{Turbo-Aggregate} breaks the quadratic communication barrier by grouping clients and employing erasure coding and a multi-group circular masking schedule; total communication scales as $O(N\log N)$ rather than $O(N^2)$, with tolerance to high (non-adaptive) dropout rates~\cite{so2021turboaggregate_jsait}.

\textbf{Three-round single-server designs.}
\emph{MicroSecAgg} streamlines single-server secure aggregation to three rounds total and reduces per-user costs to polylogarithmic in $N$, while server computation is linear in $N$~\cite{guo2024microsecagg_pets}. This substantially lowers round-trip latency and improves robustness in dynamic client participation scenarios common in IoT.

\subsection{Communication Efficiency in Secure Aggregation for IoT}
In cross-device IoT settings, bandwidth and energy constraints make round complexity and per-client payload size decisive. Classical SecAgg’s multi-round flow and $O(N)$ per-client mask traffic can dominate training time when $N$ is large. Recent protocols target these bottlenecks from complementary angles:
(i) \emph{Reduce rounds:} SAFELearn\cite{fereidooni2021safelearn} designs the FL based on Multi-Party Computation and Full Homomorphic Encryption by offering two communication rounds. Then, LightSecAgg and MicroSecAgg compress interactive steps to $\sim$3 rounds (or effectively one-shot uploads for clients), shrinking radio-on time and failure windows~\cite{so2022lightsecagg,guo2024microsecagg_pets}.
(ii) \emph{Remove $N$-dependence:} SASH’s seed-homomorphic masking yields per-client communication that is effectively $O(d)$ (model dimension) with no growth in $N$~\cite{liu2022sash}.
(iii) \emph{Near-linear total traffic:} Turbo-Aggregate achieves $O(N\log N)$ system-level communication via grouping/coding~\cite{so2021turboaggregate_jsait}.
(iv) \emph{Bounded-dropout efficiency:} FastSecAgg and LightSecAgg avoid per-user recovery chatter by design, so costs hinge on the number of \emph{active} clients rather than the number of dropouts~\cite{kadhe2020fastsecagg,so2022lightsecagg}.

Overall, the current state of the art demonstrates that secure aggregation can be made practical for bandwidth- and energy-constrained IoT deployments by minimizing rounds, eliminating $N$-dependent per-client traffic, and engineering dropout-resilient mask handling.

\section{Preliminaries}
\label{sec:preliminaries}

\subsection{Federated Learning}
\label{subsec:fl}
Federated Learning (FL) is a decentralized machine learning paradigm that enables multiple clients to collaboratively train a shared global model without transmitting their raw local data to a central server~\cite{mcmahan2017communication}. Formally, consider a setting with a central server S and a group of N clients, $\mathcal{U}=\{U_{1}, U_{2}, ..., U_{N}\}$. Each client $U_{i}$ holds a private local dataset $D_{i}$. The objective is to minimize a global loss function $L(\theta)$, which is the weighted average of the local loss functions $L_i(\theta)$:
$$ \min_{\theta} L(\theta) = \sum_{i=1}^{N} \frac{|D_i|}{|D|} L_i(\theta) $$
where $D = \bigcup D_i$. This is typically achieved using the \textbf{Federated Averaging (FedAvg)} algorithm, which proceeds in iterative rounds. In each round $t$, the server distributes the current global model $\theta_t$ to a subset of clients. Each client then trains this model on its local data to compute an updated model $\theta_{t+1}^{(i)}$, which is sent back to the server for aggregation into the new global model, $\theta_{t+1}$.

\subsection{Threat Model and Security Goals}
\label{subsec:threat_model}
We define the adversarial capabilities, trust assumptions, and security goals that our secure aggregation protocol must satisfy.

\subsubsection{Adversarial Capabilities}
We consider three classes of adversaries:
\begin{itemize}
    \item \textbf{Honest-but-Curious Server:} The central server S faithfully follows the protocol specification but may record and analyze all messages it receives in an attempt to infer any client's private model update.
    \item \textbf{Honest-but-Curious Clients:} A subset of clients may collude with the server, sharing their own secret keys, masks, or other protocol messages to recover another client's update.
    \item \textbf{External Eavesdropper:} A network-level adversary who can intercept all messages exchanged between clients and the server. We assume all channels are authenticated and integrity-protected (e.g., via TLS).
\end{itemize}

\subsubsection{Trust Assumptions}
\begin{itemize}
    \item \textbf{Key Setup:} We assume a common reference string (CRS) is generated honestly once and shared with all parties.
    \item \textbf{Honest Majority:} At most $k < N-1$ clients may collude with the server. We require at least two non-colluding clients to remain honest to ensure masks cancel correctly.
\end{itemize}

\subsubsection{Security Goals}
Our protocol is designed to achieve the following properties:
\begin{itemize}
    \item \textbf{Confidentiality:} No party (server or colluding clients) learns any individual client's model update $m_i$ in the clear. The server only learns the global sum $\sum m_i$.
    \item \textbf{Integrity:} The aggregated result computed by the server is guaranteed to be the exact arithmetic sum of all honest clients' plaintexts.
    \item \textbf{Collusion Resistance:} Even if the server colludes with up to $k \le N-2$ clients, the updates of the remaining honest clients remain confidential.
\end{itemize}

\subsection{Cryptographic Building Blocks}
\label{subsec:crypto_primitives}

\subsubsection{Ring Learning with Errors (RLWE) Assumption}
The security of our homomorphic encryption scheme is based on the RLWE assumption. Let $n$ be a power of two and $q$ be a prime modulus, defining a polynomial ring $R_q = \mathbb{Z}_q[X]/(X^n+1)$. Let $\chi$ be a narrow discrete Gaussian error distribution over this ring. The RLWE assumption states that it is computationally infeasible to distinguish between pairs of the form $(a, b = s \cdot a + e)$—where $a \leftarrow U(R_q)$ is uniform and $s, e \leftarrow \chi$—and pairs $(a, u)$, where both $a$ and $u$ are uniformly random in $R_q$.

\subsubsection{Multi Key Homomorphic Encryption}
MK-CKKS~\cite{Chen2019MKCKKS} is a multi-key variant of the approximate-arithmetic CKKS scheme~\cite{cheon2017ckks}, supporting homomorphic addition across independently-keyed ciphertexts. In order to only secure the data aggregation,  we leverage the additive homomorphism of MK-CKKS in our federated-learning pipeline. In our proposed work, each client~$U_i$ holds a secret key $s_i$ and public key $b_i$, encrypts under $(s_i,b_i)$, and the server homomorphically sums and partially decrypts to recover only the global sum—no client’s update ever appears in the clear.

Let $n$ be a power of two and $q$ a large prime.  Define the cyclotomic ring:
\begin{align*}
    \mathcal{R} = \mathbb{Z}[X]/(X^n + 1),\quad
    \mathcal{R}_q = \mathbb{Z}_q[X]/(X^n + 1),    
\end{align*}
where all additions and multiplications are implicitly \emph{mod $q$}.  Fix two discrete-Gaussian error distributions $\chi,\psi$ over~$\mathcal R$, and rely on the RLWE assumption that samples:
$$
  (a,b = s\cdot a + e)\;\in\mathcal R_q^2,
  \quad
  a\!\leftarrow U(\mathcal R_q),
  \quad s\!\leftarrow\chi,
  \quad e\!\leftarrow\psi
$$
are indistinguishable from uniform. Then, we represent the construction of MK-CKKS by the following descriptions:
\begin{itemize}
    \item \textbf{Setup \& Key Generation:} all parties agree on $(n,q,\chi,\psi)$ and sample once $ a \xleftarrow{\$} U(\mathcal R_q)$. Each client $U_i$ generates $  s_i \xleftarrow{\$}\;\chi,\quad e_i \xleftarrow{\$}\;\psi,$ and sets:
    \begin{align*}
      \text{Secret key:}\quad & s_i\in R_q,\\
      \text{Public key:}\quad 
      & b_i = -\,s_i\!\cdot a \;+\; e_i\,\in R_q,
      \quad 
      \text{pk}_i=(b_i,a).
    \end{align*}

    \item \textbf{Encryption:} Client $U_i$ with $\text{pk}_i=(b_i,a)$ encodes $m\leftarrow\mathsf{Encode}(x)$, then samples:
\[
  v \xleftarrow{\$}\chi,
  \quad
  e_0,e_1 \xleftarrow{\$}\psi,
\]
and outputs ciphertext $ct_i = (c_0^i, c_1^i)$.

\item \textbf{Decryption: } 
An honest client $U_i$ holding $s_i$ computes dot product between $sk_i = (1, s_i)$ and $ct_i = (c_0^i, c_1^i)$
\[
  c_0^i + c_1^i\cdot s_i
  = m_i + \underbrace{v\cdot e_i + e_0 + e_1\cdot s_i}_{\text{small noise}}
  \quad(\bmod\,q)
  \;\approx\; m.
\]
All operations above are $(\bmod\,q)$.
\item \textbf{Additive Homomorphism:}
Given two ciphertexts:
\[
  (c_0^i,c_1^i)=\text{Enc}_{\text{pk}_i}(m_i),
  \quad
  (c_0^j,c_1^j)=\text{Enc}_{\text{pk}_j}(m_j),
\]
the algorithm forms:
\[
  C_0 = c_0^i + c_0^j \quad(\bmod\,q),
  \quad
  C_1 = c_1^i + c_1^j \quad(\bmod\,q),
\]
which we encrypt $m_i+m_j$, since
\begin{align*}
    C_0 + C_1\cdot s_i + C_1\cdot s_j 
  &= (c_0^i+c_0^j)
    + (c_1^i+c_1^j)(s_i + s_j)\\
  &\approx m_i + m_j
  \quad(\bmod\,q).\\
\end{align*}

\end{itemize}

\subsubsection{Elliptic Curve Diffie-Hellman (ECDH)}
ECDH is a key exchange protocol used to generate pairwise masks. Each client $U_i$ generates a private scalar $sk_i^{ecdh}$ and a public key $pk_i^{ecdh} = sk_i^{ecdh} \cdot G$, where $G$ is a public generator point. Two clients, $U_i$ and $U_j$, can compute a shared secret $K_{ij} = sk_i^{ecdh} \cdot pk_j^{ecdh} = sk_j^{ecdh} \cdot pk_i^{ecdh}$. The security of ECDH relies on the Computational Diffie-Hellman (CDH) Assumption, which states that it is infeasible to compute $K_{ij}$ given only the public keys $pk_i^{ecdh}$ and $pk_j^{ecdh}$.
\section{Hyb-Agg: A New Secure Aggregation Scheme for FL}

We consider an MK\text{-}CKKS–based design in which each client \(U_i\) sends its
encrypted update \(\mathbf{c}^i=(c_0^i,c_1^i)\) and a partial decryption share
\(\mu_i = c_1^i s_i + e_i^*\,(\bmod\,q\)) to the server. The server computes
\(\sum_i c_0^i\) and \(\sum_i \mu_i\) and recovers
\(\sum_i m_i \approx \sum_i c_0^i + \sum_i \mu_i\,(\bmod\,q\).
However, since \(c_0^i + \mu_i \approx m_i\,(\bmod\,q)\), the server could decrypt
\(m_i\). We therefore mask \(\mu_i\) with a random \(r_i\) so that no entity
holds \(c_0^i\) and \(\mu_i\) together.

\subsection{Construction}
\begin{mybox}{\textsf{\textbf{Protocol} GenerateSetup}}
\textbf{Require:} All clients available.

\begin{algorithmic}[1]
  \State Server publishes parameters \((n,q,\chi,\psi,a)\).
  \State Each \(U_i\) generates MK\!-CKKS keys \((s_i,b_i)\) with \(b_i=-s_i a + e_i \bmod q\),
         \(e_i \gets \psi\), and ECDH keys \((sk_i^{\mathrm{ECDH}}, pk_i^{\mathrm{ECDH}})\).
\end{algorithmic}

\textbf{Output:} The server distributes all public keys \(\{(b_i, pk_i^{\text{ECDH}})\}\) among clients.
\end{mybox}

\begin{mybox}{\textsf{\textbf{Protocol} Encryption and Masking}}
\begin{algorithmic}[1]
            \State  Client \(U_i\) encode the local model update $x=(x_0,\dots,x_{d-1})\in\mathbb R^d$ by invoking the function:
                \[
              m(X) 
              = \sum_{j=0}^{d-1}\bigl\lfloor x_j\,\Delta\bigr\rfloor\,X^j
              \;\in \mathcal R_q,
            \]
            to produce to plaintext polynomial \(m_i \in R_q\).
            \State  Client \(U_i\) encrypts \(m_i\) under its public key:
              \[
                c_0^i = v_i\cdot b_i + m_i + e_0^i \pmod q,
              \]
              \[
                \ c_1^i = v_i\cdot a + e_1^i \pmod q,
              \]
      with random noise \(v_i \xleftarrow{\$}\chi\), \(e_0^i, e_1^i \xleftarrow{\$}\psi\).
      \State  Client \(U_i\) continuously computes:
      \[
        \mu_i = c_1^i \cdot s_i + e_i^* \pmod q,\quad e_i^*\xleftarrow{\$}\phi,
      \]
      where \(\phi\) has larger variance for security. To protect \(\mu_i\), each client generates a random mask \(r_i\). This is achieved by computing a pairwise shared secret \(K_{ij}\) with every other client \(U_j\) using ECDH. Each secret seeds a pseudorandom generator (PRG), based on the ChaCha20 stream cipher\cite{Ber08a}, to derive a shared random polynomial \(p_{ij}\). The final mask is computed as \(r_i = \sum_{j > i} p_{ij} - \sum_{j < i} p_{ji} \pmod q\), which ensures that \(\sum_{i} r_i = 0 \pmod q\). The masked share is:
    \[
      \widetilde{\mu}_i = \mu_i + r_i \pmod q.
    \]

        \end{algorithmic}
        \textbf{Output}: Client \(U_i\) send only \((c_0^i,\widetilde{\mu}_i)\) to the server.
    \end{mybox}
\begin{mybox}{\textsf{\textbf{Protocol} Aggregation and Recovery}}
\textbf{Require}:  all masked shares and ciphertexts from clients \(\{U_i\}_{i = 1}^n\)
        \begin{algorithmic}[1]
            \State Server computes:
              \begin{align*}
                  C_0 &= \sum_{i} c_0^i \pmod q \\
                  \widetilde{M} &= \sum_{i}\widetilde{\mu}_i = \sum_{i}(\mu_i + r_i) = \sum_{i}\mu_i \pmod q.
              \end{align*}
              \State   Due to mask cancellation, the server recovers the plaintext sum:
              \[
                \sum_{i} m_i \approx C_0 + \widetilde{M} \pmod q.
              \]
            \State Server server decodes the aggregated plaintext polynomial to obtain the final aggregated update:
              \[
                \sum_{i} x_i = \text{Decode}\left(\sum_{i} m_i\right).
              \]
              by applying the Decode function: \[
              \tilde x_j = \frac{\mathsf{Coef}_j(m)}{\Delta},
              \quad j=0,\dots,d-1,
            \]
        \end{algorithmic}
        \textbf{Output}: Server recovers the sum values of all clients as \(\sum_{i} x_i\)
    \end{mybox}
\section{Security Analysis}

\begin{theorem}
Under the RLWE and CDH assumptions, and modeling the ChaCha20 PRG as a pseudorandom function, there is no probabilistic polynomial-time (PPT) adversary, whether server or client, that can distinguish between real protocol executions and simulated executions with random inputs.
\end{theorem}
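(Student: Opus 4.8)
The plan is the standard simulation-based argument. Let the adversary $\mathcal{A}$ be the server together with a set $C$ of at most $N-2$ colluding clients; write $H=\mathcal{U}\setminus C$ for the honest clients, so $|H|\ge 2$, and fix a ``pivot'' index $i^{\star}\in H$. A lone corrupted client's view is contained in the server's view augmented by the broadcast aggregate, so it suffices to handle the server-and-$C$ case. I would exhibit a PPT simulator $\mathcal{S}$ that, given only $C$'s inputs, $C$'s keys, and the ideal value $z=\sum_{i\in H}m_i$, outputs a transcript computationally indistinguishable from $\mathcal{A}$'s real view, and reach it through a chain of hybrids, each adjacent pair justified by exactly one assumption. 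The map from an ECDH shared secret to a PRG seed is modeled as a random oracle $\mathcal{H}$, matching the stated random-oracle assumption.

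\noindent\textbf{Step 1 (masks).}
The first block of hybrids makes the pairwise polynomials uniform. For each pair $\{i,j\}$ with at least one endpoint in $H$, replace $p_{ij}$ by a fresh uniform element of $R_q$ via a per-pair sub-hybrid: the reduction embeds a CDH instance $(\alpha G,\beta G)$ as the two clients' ECDH public keys; as long as $\mathcal{A}$ never queries $\mathcal{H}$ at $\alpha\beta G$, the seed is uniform and independent and ChaCha20's output is pseudorandom by its PRF security, whereas if $\mathcal{A}$ does query there the reduction reads off $\alpha\beta G$ and breaks CDH. Since $|H|\ge 2$, every honest client shares at least one honest ECDH edge, so the pairwise-cancellation structure restricted to $H$ now gives $r_i=\gamma_i+\rho_i$ for $i\in H$, where $\gamma_i$ is computable by $\mathcal{A}$, $(\rho_i)_{i\in H}$ is uniform subject only to $\sum_{i\in H}\rho_i=0$, and $\sum_{i\in H}\gamma_i=-\sum_{i\in C}r_i$. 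Consequently $(\widetilde{\mu}_i)_{i\in H\setminus\{i^{\star}\}}$ is jointly uniform and independent of the rest of the transcript while $\widetilde{\mu}_{i^{\star}}$ is pinned by the sum; so I rewrite $\widetilde{\mu}_i$ ($i\ne i^{\star}$) as a fresh uniform value and set $\widetilde{\mu}_{i^{\star}}$ to force $\sum_{i\in H}(c_0^i+\widetilde{\mu}_i)\equiv z+\widehat{\nu}-\sum_{i\in C}r_i\pmod q$, where $\widehat{\nu}$ is a fresh sample of the noise polynomial that approximate decryption contributes over the honest summands. Because the $e_i^{*}\gets\phi$ terms dominate $\widehat{\nu}$ and are independent of everything else revealed, this last rewrite is statistically faithful.

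\noindent\textbf{Step 2 (keys and ciphertexts).}
After Step 1 the honest secrets $s_i$ no longer appear in any transmitted quantity, so I replace each honest public key $b_i=-s_i a+e_i$ by a uniform element of $R_q$ (per-client reduction to decision-RLWE with secret $s_i$ under the CRS $a$), and then, since only $c_0^i=v_i b_i+m_i+e_0^i$ is sent, replace it by a uniform element too: with $b_i$ uniform, $c_0^i-m_i=v_i b_i+e_0^i$ is an RLWE sample with secret $v_i\gets\chi$, so $c_0^i$ is pseudorandom (CKKS IND-CPA), again a per-client RLWE reduction. Now every honest message is either fresh uniform randomness or the single pivot correction determined by $z$, $\widehat{\nu}$, and $C$'s public data; that game is literally $\mathcal{S}$'s output, and $\mathcal{S}$ never touches an individual honest input. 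Summing the hops bounds $\mathcal{A}$'s advantage by $\binom{N}{2}\bigl(\epsilon_{\mathrm{CDH}}+\epsilon_{\mathrm{PRF}}\bigr)+2|H|\,\epsilon_{\mathrm{RLWE}}$, up to a negligible statistical term, and the argument is vacuous exactly when $|H|\le 1$, that is, when more than $N-2$ clients collude.

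\noindent\textbf{Main obstacle.}
I expect Step 1 to be the crux. CDH is a search assumption, so a distinguishing advantage on the masks cannot be reduced to it directly; one must route through the random-oracle idealisation of the key-derivation map together with PRF security of ChaCha20, keeping ``the seed is unpredictable'' and ``the stretch is pseudorandom'' cleanly separated, and carry a hybrid over all $O(N^2)$ mixed pairs in which the reduction must keep simulating the two challenged clients' masked shares while it cannot compute their shared seed. The secondary subtlety is that CKKS decryption is only approximate: the pivot client must absorb the aggregate plaintext and the aggregate decryption noise together, and making that step go through needs the honest-set aggregate noise to be statistically close to a fresh, transcript-independent sample, which is why that one step is statistical rather than perfect. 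The remaining pieces---the two RLWE reductions and the combinatorics of pairwise-mask cancellation over $H$---are routine.
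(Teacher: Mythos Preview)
Your proposal is correct and in fact more careful than the paper's own argument. The paper proceeds by two lemmas and a short hybrid chain in the \emph{opposite} order: it first replaces the honest MK-CKKS ciphertexts by encryptions of zero (semantic security from RLWE), and only then swaps the ChaCha20 outputs for uniform polynomials (CDH plus PRF). It does not introduce a pivot client, does not discuss how the simulator pins the aggregate $\sum_i m_i$, and does not treat the approximate-decryption noise at all. Your ordering---masks first, ciphertexts second---is the cleaner one: once the honest masked shares $\widetilde{\mu}_i$ have been randomised, the secret $s_i$ no longer appears in any transmitted quantity, so the per-client RLWE reduction in Step~2 is unobstructed; in the paper's order the IND-CPA reduction for client $i$ would have to produce $\widetilde{\mu}_i=c_1^i s_i+e_i^{*}+r_i$ without knowing $s_i$, a dependency the paper does not address. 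Your explicit use of a random oracle for the ECDH-to-seed derivation is also a genuine strengthening: the theorem as stated invokes only CDH and PRF security, but, as you note, CDH is a search assumption and does not by itself make the derived seed indistinguishable from random, so the paper's ``CDH $\Rightarrow$ seed unknown $\Rightarrow$ PRF output pseudorandom'' step is informal without the RO (the abstract does list a random-oracle assumption, so this is evidently intended). Finally, your pivot-and-noise device makes the simulator honest about what the server actually learns---the approximate aggregate---whereas the paper's hybrids end in a world where every honest message is random and the aggregate is never accounted for. In short: same primitives, same overall shape, but your decomposition closes several gaps the paper's sketch leaves open.
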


We prove Theorem 1 by first establishing Lemma 1 and Lemma 2, from which the theorem follows as a direct consequence. Based on this strategy, our proof demonstrates that the protocol achieves confidentiality, integrity, and collusion resistance under standard cryptographic assumptions.

\begin{lemma}
    Given a secure cryptographic scheme, the Encryption and Masking protocol is secure. Under the RLWE and CDH assumptions, and modeling the ChaCha20 PRG as a pseudorandom function, no probabilistic polynomial-time (PPT) adversary—whether server or client—can distinguish between real protocol executions and simulated executions with random inputs.
\end{lemma}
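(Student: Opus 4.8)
I would prove Lemma~1 by a hybrid (game-hopping) argument that transforms a real run of the \textsf{Encryption and Masking} protocol into one whose output depends on the honest clients' updates only through their aggregate; the simulator then merely runs the protocol with random inputs of the same aggregate, and indistinguishability of the two endpoints is exactly the claim. Fix the corrupted coalition $\mathcal C$, namely the server together with $k\le N-2$ clients, and let $\mathcal H=\mathcal U\setminus\mathcal C$, so $|\mathcal H|\ge 2$. The adversary's view consists of the CRS, all public keys $\{(b_i,pk_i^{\mathrm{ECDH}})\}$, the internal state of the clients in $\mathcal C$, and the honest messages $\{(c_0^i,\widetilde\mu_i)\}_{i\in\mathcal H}$; note that $c_1^i$ is never transmitted, but that the server can already form $C_0+\widetilde M\approx\sum_i m_i$, so the aggregate is unavoidable leakage and ``random inputs'' is meant modulo it (a client-only adversary does not even see the aggregate).

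\textbf{Game sequence.} First I would use the partial-decryption identity $c_0^i+\mu_i=m_i+\eta_i$, with $\eta_i:=v_ie_i+e_0^i+e_1^is_i+e_i^*$, to rewrite $\widetilde\mu_i=m_i-c_0^i+\eta_i+r_i$ for $i\in\mathcal H$ --- a pure reparametrization. Next I would invoke the smudging property supplied by the wide distribution $\phi$: since $\|v_ie_i+e_0^i+e_1^is_i\|$ is polynomially bounded while $\phi$ is super-polynomially wider, $\eta_i$ is statistically close to a fresh $\widetilde\eta_i\leftarrow\phi$ independent of $(v_i,s_i,e_i,e_0^i,e_1^i)$, so replace $\eta_i$ by $\widetilde\eta_i$. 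Then, for every honest-honest pair $\{i,j\}\subseteq\mathcal H$, replace the ECDH shared secret $K_{ij}$ by a uniform group element, reducing any distinguisher to a CDH solver (modeling the seed-derivation hash as a random oracle, consistent with the paper's stated assumptions); pairwise secrets with an endpoint in $\mathcal C$ remain, as the adversary can compute them. Modeling ChaCha20 as a PRF, next replace each honest-honest $p_{ij}$ by a uniform element of $\mathcal R_q$. Now each honest mask $r_i=\sum_{j>i}p_{ij}-\sum_{j<i}p_{ji}$ consists only of fresh uniform terms, each shared by exactly two honest masks with opposite signs; since $\mathcal H$ induces a complete (hence connected) graph and $|\mathcal H|\ge2$, the standard secret-sharing argument makes $\{r_i\}_{i\in\mathcal H}$ uniform on the coset $\{(\rho_i):\sum_{i\in\mathcal H}\rho_i=-\sum_{i\in\mathcal C}r_i\}$ whose offset is adversary-computable. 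Hence $\{\widetilde\mu_i\}_{i\in\mathcal H}=\{m_i-c_0^i+\widetilde\eta_i+r_i\}$ is, conditioned on the rest of the view, uniform subject only to a prescribed value of $\sum_{i\in\mathcal H}\widetilde\mu_i$, and I would resample it so ($\widetilde\mu_i$ uniform for $i\in\mathcal H\setminus\{i^*\}$ and $\widetilde\mu_{i^*}$ the residual). Finally I would replace each honest $b_i$ and then each honest $c_0^i$ by uniform ring elements via the RLWE assumption (IND-CPA of MK-CKKS, i.e.\ the ``secure cryptographic scheme'' of the hypothesis); the reductions go through because, after the previous step, the honest masked shares are reconstructible from the $c_0^i$ and fresh coins alone, with no direct use of $s_i$ or $v_i$.

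\textbf{Wrap-up.} In the final game the honest $c_0^i$ and all honest $\widetilde\mu_i$ but one are independent uniform ring elements, and the residual $\widetilde\mu_{i^*}$ together with $\{c_0^i\}_{i\in\mathcal H}$ and the known $\sum_{i\in\mathcal C}r_i$ discloses exactly $\sum_{i\in\mathcal H}m_i+\sum_{i\in\mathcal H}\widetilde\eta_i$, the honest aggregate blurred by independent flooding noise; corrupted clients' messages are produced from their own known inputs and coins. So the view in this game --- and, by the chain of reductions, the view in the real execution --- depends on the honest updates only through their sum, hence is computationally the same whether honest clients feed in their true updates or random ones of that sum; this is precisely the simulated execution, so no PPT server or client distinguishes them, proving Lemma~1.

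\textbf{Main obstacle.} I expect the crux to be the smudging step together with the ordering of the games. Noise flooding is what statistically detaches $\widetilde\mu_i$ from $(v_i,s_i,e_i)$; without it the masked shares still carry the encryption randomness, and the RLWE reductions cannot be run --- and those reductions must come \emph{after} the ECDH/PRG mask-randomization, since otherwise the leftover share $\widetilde\mu_{i^*}$ would secretly re-encode $v_{i^*}$ and $s_{i^*}$. Pinning down the flooding parameter (so that $\phi$ dominates $v_ie_i+e_0^i+e_1^is_i$ while correctness survives) and verifying that $|\mathcal H|\ge2$ injects a fresh uniform term into every honest mask simultaneously --- the precise point where the $k\le N-2$ collusion bound is consumed --- are the delicate parts; the RLWE, CDH and PRF invocations themselves are routine.
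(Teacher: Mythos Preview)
Your proposal is correct and in fact more careful than the paper's own argument, but the route differs. The paper uses a short three-hybrid sequence in the \emph{opposite} order: first it replaces all honest ciphertexts by encryptions of zero via MK-CKKS semantic security (RLWE), and only then swaps the ChaCha20 outputs for truly random polynomials (PRF/CDH); it does not isolate a smudging step at all, nor does it explicitly track the sum constraint on the honest masks. Your decomposition inserts an explicit noise-flooding step that statistically detaches $\widetilde\mu_i$ from $(s_i,v_i)$, then randomizes the masks (CDH, PRF, resampling on the zero-sum coset), and only at the very end invokes RLWE on $b_i$ and $c_0^i$. What this buys you is a clean justification of the RLWE hop: in the paper's ordering the reduction to IND-CPA must simulate $\widetilde\mu_i=c_1^is_i+e_i^*+r_i$ while the challenge hides $s_i$, which is exactly the dependency you flag as the ``main obstacle'' --- the paper glosses over it by appealing generically to semantic security. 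Your treatment also makes explicit where the $|\mathcal H|\ge 2$ bound is consumed (connectivity of the honest-honest mask graph) and that the simulator can only match the aggregate, points the paper defers to Lemma~2 or leaves implicit. In short: same assumptions, same endpoint, but your finer game sequence closes a gap in the paper's hybrid transitions at the cost of a longer chain.
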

\begin{proof}
Let $\mathcal{A}$ be any PPT adversary. We consider both server and client adversaries simultaneously, as they share the same protocol view:

\emph{Common Adversarial View:}
\[
\mathcal{V} = \left( \{pk_i^{\text{ECDH}}\}_{i=1}^n, \{pk_i^{\text{MK-CKKS}}\}_{i=1}^n, \{(c_0^{i}, \tilde{\mu}_i)\}_{i=1}^n \right)
\]
where $c_0^{i} = \text{Enc}(pk_i, \mathbf{m}_i)$ and $\tilde{\mu}_i = \mu_i + \mathbf{r}_i$.\\

\begin{enumerate}[leftmargin=*]
    \item \textbf{Ciphertext Indistinguishability}:
    For each client $U_i$, the MK-CKKS ciphertext is computed as:
    \[
    \text{Enc}(pk_i, \mathbf{m}_i) \equiv (v_i \cdot b_i + \mathbf{m}_i + e_0^{i}, v_i \cdot a + e_1^{i}) \pmod q
    \]
    Under the RLWE assumption, for random $v_i, e_0^{i}, e_1^{i} \leftarrow \chi$, the ciphertext is computationally indistinguishable from a pair of uniformly random polynomials $(u_0, u_1) \in \mathcal R_q^2$. Therefore:
    \[
    \left| \Pr[\mathcal{A}(\text{Enc}(pk_i, \mathbf{m}_i)) = 1] - \Pr[\mathcal{A}(u_0, u_1) = 1] \right| \leq \text{negl}(\lambda)
    \]

    \item \textbf{Mask Obfuscation}:
    Each additive mask \(\mathbf{r}_i\) is the sum and difference of multiple polynomials \(p_{ij}\), as described in Section IV-A. Each \(p_{ij}\) is generated by a PRG (ChaCha20) seeded with a shared secret \(K_{ij}\) derived from an ECDH key exchange.
    \[
    \mathbf{p}_{ij} = \text{ChaCha20-PRG}(K_{ij})
    \]
    For any non-colluding pair $(U_i, U_j)$, the ECDH secret $K_{ij}$ is computationally hard to determine from the public keys \(pk_i^{\text{ECDH}}\) and \(pk_j^{\text{ECDH}}\) under the **Computational Diffie-Hellman (CDH) assumption**. Modeling **ChaCha20 as a secure pseudorandom function (PRF)**, its output \(p_{ij}\) is computationally indistinguishable from a truly random polynomial. Since \(\mathbf{r}_i\) is a linear combination of these pseudorandom polynomials, \(\mathbf{r}_i\) itself is computationally indistinguishable from a random polynomial in \(\mathcal R_q\).

    \item \textbf{Hybrid Argument}:
    Construct a series of hybrid games:
    \begin{itemize}
        \item $\mathsf{Hyb}_0$: The real protocol execution.
        \item $\mathsf{Hyb}_1$: Replace all ciphertexts $(c_0^{i}, c_1^{i})$ with encryptions of zero. This is indistinguishable from \(\mathsf{Hyb}_0\) by the semantic security of MK-CKKS, which relies on the RLWE assumption.
        \item $\mathsf{Hyb}_2$: Replace the output of the ChaCha20 PRG for each pairwise secret with truly random polynomials. This is indistinguishable from \(\mathsf{Hyb}_1\) under the assumption that ChaCha20 is a secure PRF.
    \end{itemize}
    In \(\mathsf{Hyb}_2\), the masked share \(\tilde{\mu}_i\) is a sum of a component related to an encryption of zero and a truly random polynomial, which is itself indistinguishable from random. As the adversary's view consists entirely of random-looking values, it can gain no advantage. The indistinguishability of these hybrids implies the security of the protocol.
\end{enumerate}
\end{proof}

\begin{lemma}[Collusion Resistance]\label{thm:collusion}
A coalition of $k \leq n-2$ clients and the server cannot recover any honest client's model update $\mathbf{x}_h$, provided there are at least two honest clients remaining.
\end{lemma}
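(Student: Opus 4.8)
The plan is a simulation/hybrid argument that isolates the single cryptographic unknown in the coalition's view. Fix the coalition $\mathcal{C}$ consisting of the server together with $k\le n-2$ clients, and let $H$ be the set of honest clients, with $|H|\ge 2$. Fix the target honest client $U_h$ and an arbitrary second honest client $U_{h'}\in H\setminus\{h\}$. I would first write out exactly what $\mathcal{C}$ sees: the public keys $\{pk_i^{\mathrm{ECDH}}, b_i\}_{i=1}^n$, the full transcript $\{(c_0^i,\widetilde{\mu}_i)\}_{i=1}^n$, and, in addition, the private state $\{s_i, sk_i^{\mathrm{ECDH}}, m_i, v_i, e_0^i, e_1^i, e_i^*\}_{i\in\mathcal{C}}$ of the colluding clients. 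The structural observation that drives the proof is that for every pair $\{i,j\}\neq\{h,h'\}$ at least one endpoint lies in $\mathcal{C}$, so the coalition can recompute $K_{ij}=sk_i^{\mathrm{ECDH}}\cdot pk_j^{\mathrm{ECDH}}$ from a colluding scalar and a public key, hence also the PRG output $p_{ij}$. The unique pair whose shared secret is unavailable is $(h,h')$. Consequently $\mathcal{C}$ knows $r_i$ for every $i\in\mathcal{C}$ and, since $\sum_i r_i=0\pmod q$, it knows $r_h+r_{h'}$, but it knows neither $r_h$ nor $r_{h'}$ individually.

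Next I would run the hybrid sequence. $\mathsf{Hyb}_0$ is the real execution. $\mathsf{Hyb}_1$ replaces the PRG output $p_{hh'}$ by a uniformly random polynomial $u\in\mathcal{R}_q$; indistinguishability follows from the CDH assumption—which makes $K_{hh'}$ infeasible to compute from the two honest parties' ECDH public keys, since neither $U_h$ nor $U_{h'}$ lies in $\mathcal{C}$—together with the modeling of the ChaCha20 PRG as a pseudorandom function, exactly as in the proof of Lemma~1. $\mathsf{Hyb}_2$ replaces $c_0^h$ and $c_0^{h'}$ by encryptions of zero under $pk_h$ and $pk_{h'}$; indistinguishability is precisely the ciphertext-indistinguishability step of Lemma~1 (semantic security of MK-CKKS under RLWE), and is sound because $\mathcal{C}$ holds neither $s_h$ nor $s_{h'}$.

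In $\mathsf{Hyb}_2$, write $r_h=\rho_h+\sigma u$ where $\sigma\in\{+1,-1\}$ is the sign with which $p_{hh'}$ enters $r_h$ and $\rho_h$ is a polynomial computable from the coalition's view; then $\widetilde{\mu}_h=\mu_h+\rho_h+\sigma u$. Since $u$ is uniform on $\mathcal{R}_q$ and independent of everything else, $\widetilde{\mu}_h$ is uniform and independent of $(m_h,s_h,v_h,e_0^h,e_1^h,e_h^*)$ given the rest of the view; symmetrically $\widetilde{\mu}_{h'}=\mu_{h'}+\rho_{h'}-\sigma u$. Hence the only residual information in $\mathcal{C}$'s view that jointly depends on the honest inputs is (i) the zero-ciphertexts $c_0^h,c_0^{h'}$, which carry no information about $m_h,m_{h'}$, and (ii) the combination $\widetilde{\mu}_h+\widetilde{\mu}_{h'}=\mu_h+\mu_{h'}+\rho_h+\rho_{h'}$, where $\rho_h+\rho_{h'}=-\sum_{i\in\mathcal{C}}r_i$ is coalition-known; together with $C_0$ this yields only the aggregate $m_h+m_{h'}$, which is exactly the leakage that one-shot aggregation must permit (and which, when $k=n-2$, already follows from $\sum_i m_i$). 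Therefore $\mathcal{C}$'s view in $\mathsf{Hyb}_2$ can be produced by a simulator given only $\{m_i\}_{i\in\mathcal{C}}$ and $m_h+m_{h'}$, so no PPT coalition recovers $\mathbf{x}_h$ with advantage non-negligibly beyond what that aggregate already allows; chaining the hybrids transfers this to the real execution up to negligible loss, which is the claim.

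The step I expect to be the main obstacle is the bookkeeping in the third paragraph: one must argue rigorously that $(h,h')$ is the \emph{only} pair whose PRG output is hidden, that the identity $\sum_i r_i=0$ therefore leaks exactly $m_h+m_{h'}$ and nothing finer, and that the correlation between $r_h$ and $r_{h'}$ (they share $\pm u$) does not supply a second independent linear relation which, combined with the ciphertexts, would separate $m_h$ from $m_{h'}$. Closing this last gap is where the RLWE-based secrecy of the individual ciphertexts $c_0^h,c_0^{h'}$—and the coalition's lack of $s_h,s_{h'}$—must be invoked.
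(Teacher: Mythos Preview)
Your argument is correct and follows essentially the same route as the paper's proof: isolate the honest--honest pairwise secret(s), invoke CDH plus the PRG assumption to conclude the residual mask component is pseudorandom, and appeal to RLWE for the ciphertext side; your explicit hybrid/simulation treatment, which pins down the residual leakage as exactly $m_h+m_{h'}$, is more rigorous than the paper's version, which simply asserts that the residual mask is indistinguishable from uniform and relegates the aggregate-leakage observation to a separate paragraph. One small slip: the claim that $(h,h')$ is the \emph{unique} hidden pair and that ``for every pair $\{i,j\}\neq\{h,h'\}$ at least one endpoint lies in $\mathcal{C}$'' holds only in the worst case $|H|=2$; when $|H|>2$ there are additional honest--honest pairs whose secrets are also unavailable to the coalition, but since this only adds further pseudorandom terms to the residual, restricting attention to $k=n-2$ (as your bookkeeping implicitly does) is without loss.
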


\begin{proof}
Let $T \subset [n]$ with $|T| = k \leq n-2$ be the set of colluding clients. Let $H$ be the set of honest clients, where $|H| \geq 2$. For any honest client $U_h \in H$:

\begin{enumerate}[leftmargin=*]
    \item \textbf{Residual Mask Analysis}:
    The adversary knows the private keys of all clients in $T$. The mask \(\mathbf{r}_h\) for an honest client \(U_h\) is composed of two parts:
 \begin{align*}
\mathbf{r}_h ={} & \underbrace{\sum_{j \in T} \sigma_{hj} \cdot \text{PRG}(K_{hj})}_{\text{Known to Adversary}} 
& + \underbrace{\sum_{j \in H, j \neq h} \sigma_{hj} \cdot \text{PRG}(K_{hj})}_{\text{Residual Mask}}
\end{align*}
    The adversary can compute the first term because it possesses the secret keys of all clients in $T$ and can therefore compute \(K_{hj}\) for all \(j \in T\). However, the residual mask depends on shared secrets \(K_{hj}\) where both \(h, j \in H\). Since there is at least one other honest client \(h' \in H\), the residual sum includes the term \(\text{ChaCha20-PRG}(K_{hh'})\). Under the CDH assumption, the adversary cannot compute the secret \(K_{hh'}\).

    \item \textbf{Indistinguishability Preservation}:
    Because the residual mask term is a sum that includes at least one output from the PRG seeded with an unknown secret, the entire residual mask remains pseudorandom from the adversary's perspective.
    \[
    \mathbf{r}_h^{\text{res}} = \sum_{j \in H, j \neq h} \sigma_{hj} \cdot \text{ChaCha20-PRG}(K_{hj}) \approx_c \mathbf{r}'_h \stackrel{\$}{\leftarrow} \mathcal R_q
    \]
    Thus, the complete mask \(\mathbf{r}_h\) remains computationally indistinguishable from a random polynomial, even with the knowledge of the colluding parties.
\end{enumerate}    
\end{proof}
   
\textbf{Aggregation Security}:
    While colluders know their own updates \(\sum_{i \in T} \mathbf{x}_i\), the server only ever sees the final aggregate sum:
    \[
    \sum_{i=1}^n \mathbf{m}_i = \sum_{i \in T} \mathbf{m}_i + \sum_{i \in H} \mathbf{m}_i
    \]
    Because there are at least two honest parties (\(|H| \geq 2\)), the term \(\sum_{i \in H} \mathbf{m}_i\) is a sum of at least two unknown plaintexts. The individual ciphertexts and masked shares of honest clients are protected by RLWE-based encryption and pseudorandom masks, respectively. Therefore, the coalition cannot isolate or decrypt any individual honest client's update.

This secure aggregation mechanism ensures that the server or colluding clients never observe individual plaintext updates, thus providing strong privacy guarantees without compromising the accuracy of the federated learning model aggregation. The proofs for Lemma 1 and Lemma 2 confirm Theorem 1, establishing that Hyb-Agg is secure under the RLWE and CDH assumptions, and by modeling ChaCha20 as a secure PRF. $\Box$
\section{Performance Evaluation}
\label{sec:evaluation}

This section presents a comprehensive empirical evaluation of the proposed Hyb-Agg protocol. The primary objective is to validate the theoretical performance claims and assess its practical feasibility, with a specific focus on IoT environments. To this end, our evaluation includes benchmarks on two distinct hardware platforms: a high-performance machine to establish a baseline, and a representative resource-constrained device (Raspberry Pi 4) to demonstrate viability at the edge.

The analysis is structured as follows. We begin by detailing the experimental methodology, including the simulation environments and cryptographic parameterization. We then provide a detailed theoretical analysis of the protocol's computational, communication, and storage complexity. Subsequently, we analyze the protocol's numerical precision, a critical factor for any scheme based on approximate homomorphic encryption like CKKS~\cite{cheon2017ckks}. We then present and analyze the empirical results, analyzing scalability with respect to both client population size ($N$) and data vector dimensionality ($d$). We conclude with a discussion that contextualizes these findings and outlines the inherent design trade-offs of the Hyb-Agg approach.

\subsection{Experimental Methodology}
\label{sec:setup}

\subsubsection{Environments and Implementation}
To provide a comprehensive view, two distinct sets of experiments were conducted. The first set was run on a high-performance laptop to serve as a baseline. The second, more critical set for our use case, involved running the \textit{entire} simulation including both the server and all client processes on a Raspberry Pi 4 to directly assess end-to-end performance in a resource-constrained environment.

\paragraph{High-Performance Environment}
The baseline experiments were executed on a laptop equipped with an AMD Ryzen 9 6900HS processor (3.30 GHz) and 16.0 GB of RAM. The software environment consisted of Ubuntu 24.04 running under the Windows Subsystem for Linux 2 (WSL2).

\paragraph{Resource-Constrained Environment}
To validate performance for IoT scenarios, the complete simulation was also deployed and executed on a Raspberry Pi 4 Model B. The device specifications are as follows:
\begin{itemize}
    \item \textbf{Processor:} Broadcom BCM2711, Quad-core Cortex-A72 (ARM v8) 64-bit SoC @ 1.5 GHz.
    \item \textbf{RAM:} 8GB LPDDR4-3200 SDRAM.
    \item \textbf{Storage:} Micro SD card for the operating system and data.
\end{itemize}

\paragraph{Implementation Details}
The cryptographic primitives that form the foundation of Hyb-Agg were implemented in C++. Specifically, the Multi-Key CKKS (MK-CKKS) scheme~\cite{Chen2019MKCKKS} was implemented using the open-source \textbf{OpenFHE library}~\cite{openfhe_lib}. All experiments were conducted in a simulated environment on a single machine (either the laptop or the Raspberry Pi) to ensure consistent and reproducible measurements of cryptographic and computational overhead, thereby isolating these factors from the variability of network latency.

\subsubsection{Cryptographic Parameterization}
The performance and security of the CKKS scheme are governed by a set of carefully chosen parameters. For all experiments, the parameters were selected to provide a standard \textbf{128-bit security level} against both classical and quantum adversaries, as recommended by established FHE security standards~\cite{cryptoeprint:2019/939}.

\begin{itemize}
    \item \textbf{Ring Dimension ($n$):} The ring dimension $n$ is a power-of-two integer that defines the degree of the polynomial ring $\mathbb{R}_q = \mathbb{Z}_q[X]/(X^n+1)$ used for all cryptographic operations~\cite{openfhe_lib}. It is a primary driver of both performance and ciphertext capacity. In our experiments, $n$ was chosen dynamically by the OpenFHE library. For a given data vector of dimension $d$, the library selects the smallest power of two $n$ such that the number of available plaintext slots, $n/2$, is sufficient to encode the vector (i.e., $n/2 \ge d$), while simultaneously satisfying the 128-bit security requirement. This discrete selection process is the cause of the "step-function" behavior observed in our performance results.
    \item \textbf{Scaling Factor and Moduli:} The CKKS scheme performs approximate arithmetic on real numbers by encoding them as scaled integers~\cite{cheon2017ckks}. We configured OpenFHE with a \texttt{multiplicativeDepth} of 1, as our aggregation protocol only requires homomorphic additions; additions contribute far less noise than multiplications and OpenFHE sizes the modulus based on depth and the scaling factor~\cite{openfhe_aaai_tutorial}. The \texttt{scaleModSize} was set to 50 bits. This parameter sets the bit-length used for the scaling/rescaling levels in the CKKS modulus chain and thereby supports a large scaling factor $\Delta$ (typically $\Delta=2^{\text{scaleModSize}}$), yielding high-precision encodings~\cite{openfhe_docs_params,gharibyar2025ckksparams}.

\end{itemize}

\subsubsection{Data Generation}
The experiments were conducted using \textbf{synthetic data}. Specifically, client inputs were modeled as vectors of randomly generated 64-bit floating-point numbers (double), with the dimensionality $d$ varied as an experimental parameter. The choice of synthetic data is intentional and justified by the scope of our evaluation. The primary goal is to measure the performance overhead of the cryptographic protocol itself, not the efficacy of a specific machine learning model. The computational and communication costs of our protocol are dependent on the \textit{dimensionality} and \textit{bit-length} of the client data vectors, not on their statistical properties. Therefore, using synthetic data allows for a clean and reproducible measurement of the protocol's scalability, a standard practice when benchmarking the fundamental costs of privacy-enhancing technologies~\cite{royalsoc2023synthetic, cormode2025synthetic}.

\subsection{Theoretical Performance Analysis}
We first analyze the asymptotic complexity of Hyb-Agg in terms of computation, communication, and storage for both clients and the server, where $N$ is the number of clients and $d$ is the dimension of the model update vector.

\subsubsection{Client-Side Complexity}
\begin{itemize}
    \item \textbf{Computation ($O(N \cdot d)$):} The total computational work for each client is the sum of the costs of encoding ($O(d)$), encryption ($O(d \log d)$), partial decryption share generation ($O(d \log d)$), and pairwise mask generation. The dominant step with respect to $N$ is the mask generation, which involves an ECDH key exchange with each of the other $N-1$ clients and generating a mask of dimension $d$ for each. This results in a complexity of $O(N \cdot d)$, making it the overall asymptotic complexity for the client.
    \item \textbf{Communication ($O(d)$ per round):} During the one-time setup phase, each client receives the public keys of all other $N-1$ clients, resulting in an initial communication cost of $O(N)$. However, in each subsequent learning round, a client sends only a single payload to the server, consisting of the components ($c_{0}^{i}, \tilde{\mu}_{i}$). The size of this payload is determined by the cryptographic parameters and the vector dimension $d$, but is independent of the number of clients $N$. Therefore, the crucial per-round communication cost is $O(d)$.
    \item \textbf{Storage ($O(N)$):} Each client must store the public keys (both MK-CKKS and ECDH) of all other $N-1$ clients to generate the pairwise masks. This leads to a storage requirement that is linear in the number of clients.
\end{itemize}

\subsubsection{Server-Side Complexity}
\begin{itemize}
    \item \textbf{Computation ($O(N \cdot d)$):} The server's workload consists of aggregating ciphertexts and masked shares, followed by recovery and decoding. The dominant task is summing the vector components received from all $N$ clients. This involves $N-1$ additions for each of the two components, resulting in a total complexity of $O(N \cdot d)$.
    \item \textbf{Communication ($O(N \cdot d)$ per round):} In each round, the server receives one payload of size $O(d)$ from each of the $N$ clients. The total uplink communication cost for the server is therefore linear in both the number of clients and the data dimension, with a complexity of $O(N \cdot d)$. The downlink communication is minimal, consisting only of the final aggregated result of size $O(d)$.
    \item \textbf{Storage ($O(d)$):} The server does not need to store individual client submissions. It only needs to maintain the running sums of the received vector components, requiring storage proportional to the vector dimension $d$.
\end{itemize}

\subsection{Analysis of Protocol Accuracy and Precision}
A crucial consideration for any system employing approximate homomorphic encryption is the impact of inherent computational noise on the final result~\cite{cheon2017ckks,lee2023elasm,cryptoeprint:2019/939}. In our experiments, we observed that the decrypted aggregate was numerically identical to the true sum of the plaintext vectors, resulting in an effective approximation error of zero.

This outcome is not a contradiction of the approximate nature of CKKS but is rather a direct consequence of the specific application (secure summation) combined with a robust parameterization. The noise introduced by homomorphic \emph{addition} grows much more slowly than that from multiplication~\cite{cheon2017ckks,lee2023elasm}. With a \texttt{scaleModSize} of 50 bits, OpenFHE’s CKKS configuration uses a large scaling factor $\Delta$ (on the order of $2^{50}$), which, together with depth-1 circuits, keeps the accumulated addition noise well below the decoding threshold~\cite{openfhe_docs_params,openfhe_aaai_tutorial,gharibyar2025ckksparams}. The final decoding step divides by $\Delta$ and rounds, effectively quantizing this negligible error to zero~\cite{cheon2017ckks}.

\subsection{Empirical Performance Evaluation}
\label{sec:empirical_eval}

\subsubsection{Communication Overhead}
Communication efficiency is a paramount concern in federated learning. Our empirical results validate the theoretical analysis. As shown in \textbf{Fig. \ref{fig:client_comm_vs_clients}}, a key architectural strength of Hyb-Agg is that the per-round communication cost per client is constant with respect to the number of participating clients, $N$. For a fixed data vector size, the client uplink remains flat, confirming the $O(d)$ per-round complexity.

Conversely, \textbf{Fig. \ref{fig:client_expansion}} shows that the communication cost scales with the dimension of the data vector, $d$. The characteristic "step" pattern reflects the underlying cryptographic parameter changes, where the ring dimension $n$ must increase to accommodate a larger $d$.

To quantify the overhead, we analyze the communication expansion factor. For a data vector of dimension $d=65,536$ composed of 64-bit (8-byte) floating-point numbers, the plaintext size is $65,536 \times 8 = 524,288$ bytes. As shown in \textbf{Fig. \ref{fig:client_expansion}}, the corresponding `ClientUplinkBytes` is approximately 6.3MB. This yields a communication expansion factor of approximately $6,300,000 / 524,288 \approx \mathbf{12\times}$. This overhead is the explicit trade-off for achieving strong privacy guarantees with a non-interactive protocol.

\textbf{Note on expansion vs.\ data size.}
The expansion factor primarily reflects CKKS packing efficiency (slots $= n/2$).
When vectors use most slots, overhead is amortized and we observe $\approx 12\times$ (e.g., $d=65{,}536$ with $n=131{,}072$);
with smaller $d$, unused slots let fixed polynomial/modulus bytes dominate, so the factor can be higher (about $24\times$ near $d\approx 4{,}095$), consistent with the “step” behavior in Fig.~2.
In practice, simple packing—grouping multiple smaller subvectors (layers/blocks) into one plaintext—or choosing parameters that better match $d$ keeps results close to the $\approx 12\times$ regime while preserving the target security level.

\begin{figure}[htbp]
\centering
\includegraphics[width=\columnwidth]{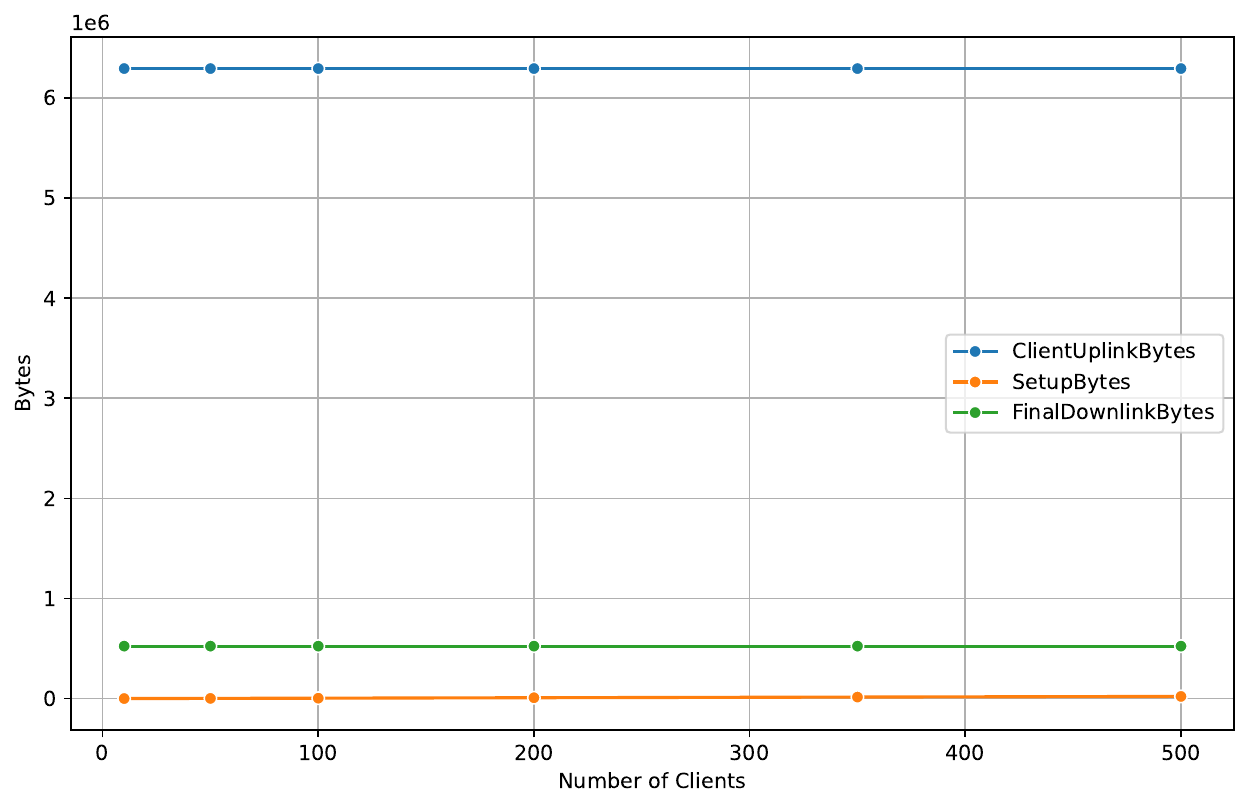}
\caption{Byte Sizes vs. Number of Clients (DataSize = 65536). This figure illustrates how various communication byte sizes per client change as the number of clients increases, while the data size is kept constant at 65536.}
\label{fig:client_comm_vs_clients}
\end{figure}

\begin{figure}[htbp]
\centering
\includegraphics[width=\columnwidth]{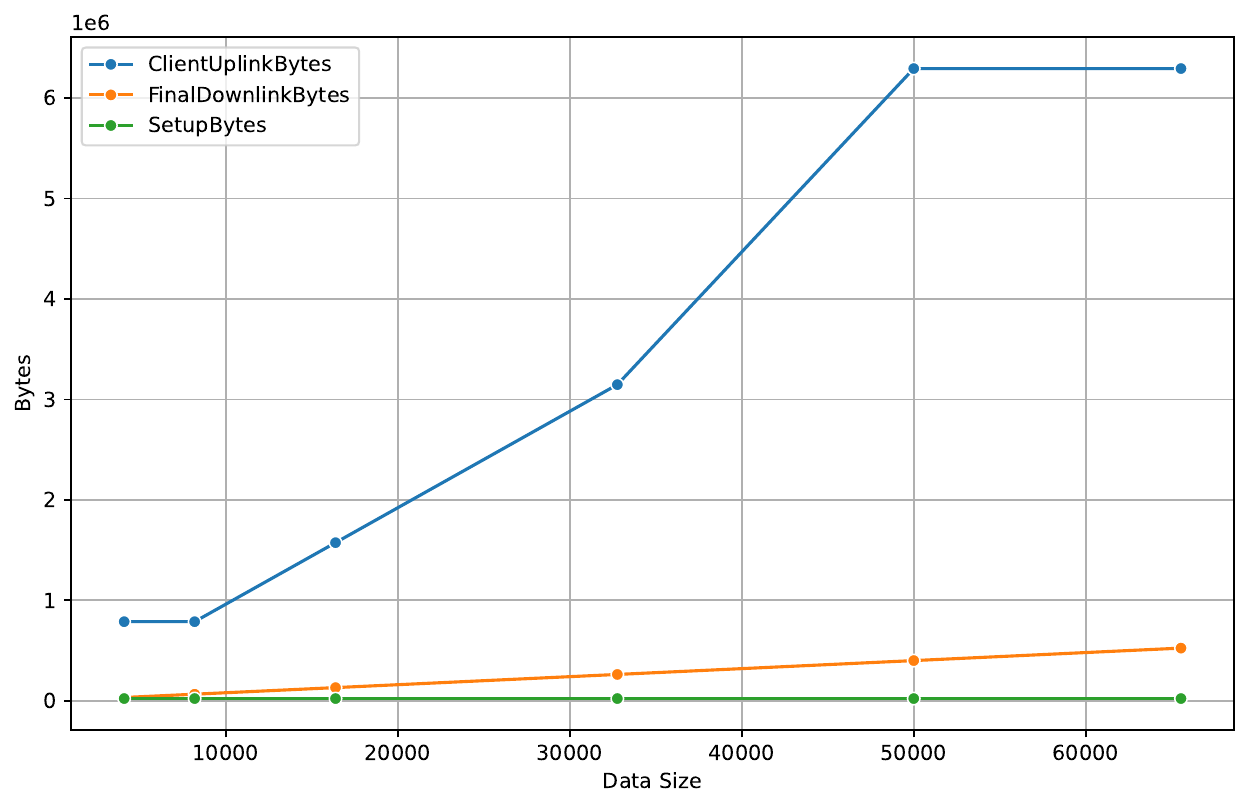}
\caption{Byte Sizes vs. Data Size (NumClients = 500). This figure shows how the client uplink, final downlink, and setup byte sizes vary with different data vector sizes, for a fixed number of clients (500)}
\label{fig:client_expansion}
\end{figure}

\subsubsection{Computational Performance on High-Performance Hardware}
\textbf{Client-Side Computation:} As predicted by our theoretical analysis and shown in \textbf{Fig. \ref{fig:client_time_vs_clients}}, the total client-side time on our test laptop scales linearly with the number of clients $N$. The detailed breakdown reveals this is driven entirely by the pairwise mask generation. \textbf{Fig. \ref{fig:client_time_vs_size}} illustrates the impact of data dimensionality, where the distinct step-function behavior is caused by the FHE library selecting a larger ring dimension $n$ to accommodate the increased data vector size.

\begin{figure}[htbp]
\centering
\includegraphics[width=\columnwidth]{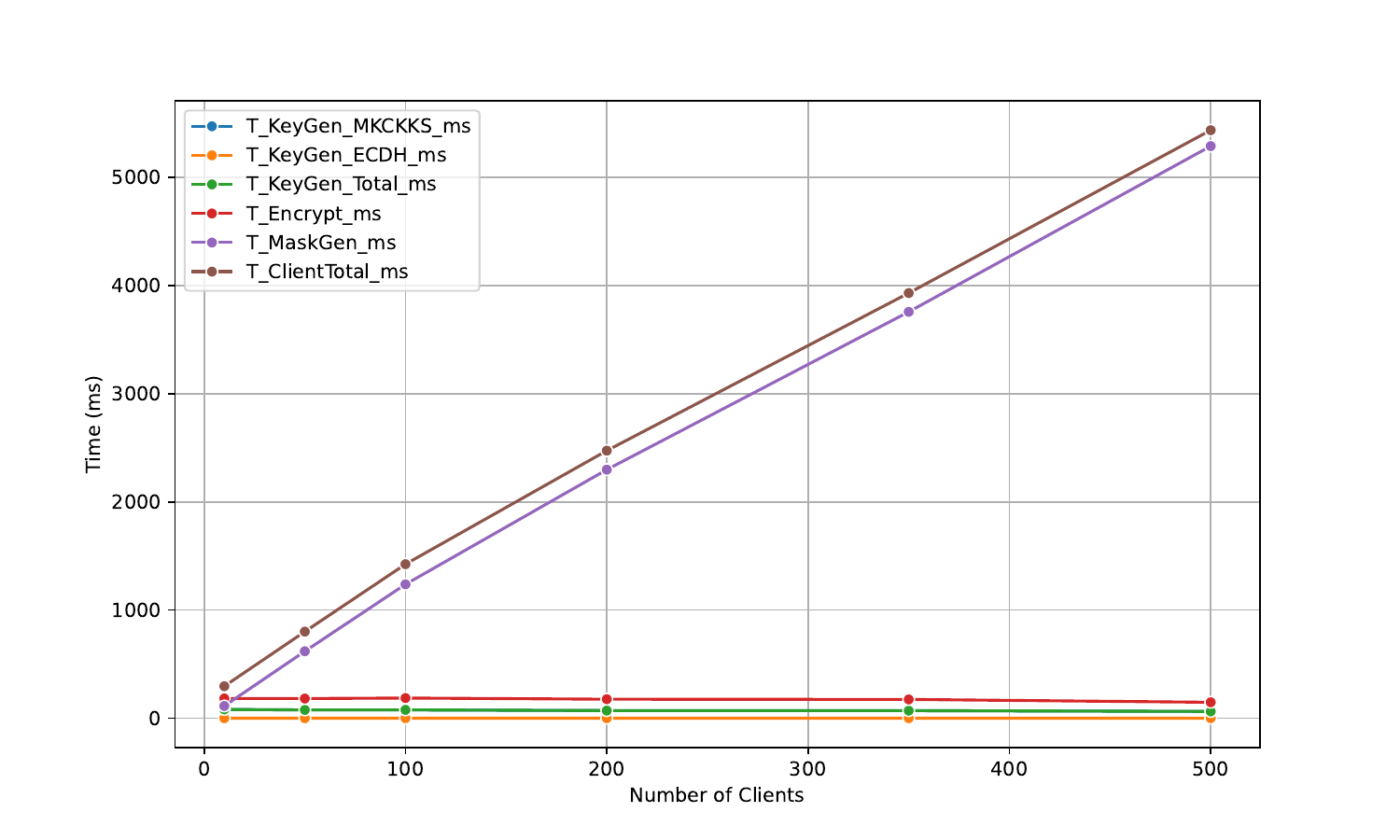}
\caption{Wall-clock running time per client vs. number of clients for Hyb-Agg. The data vector size is fixed to 65,536.}
\label{fig:client_time_vs_clients}
\end{figure}

\begin{figure}[htbp]
\centering
\includegraphics[width=\columnwidth]{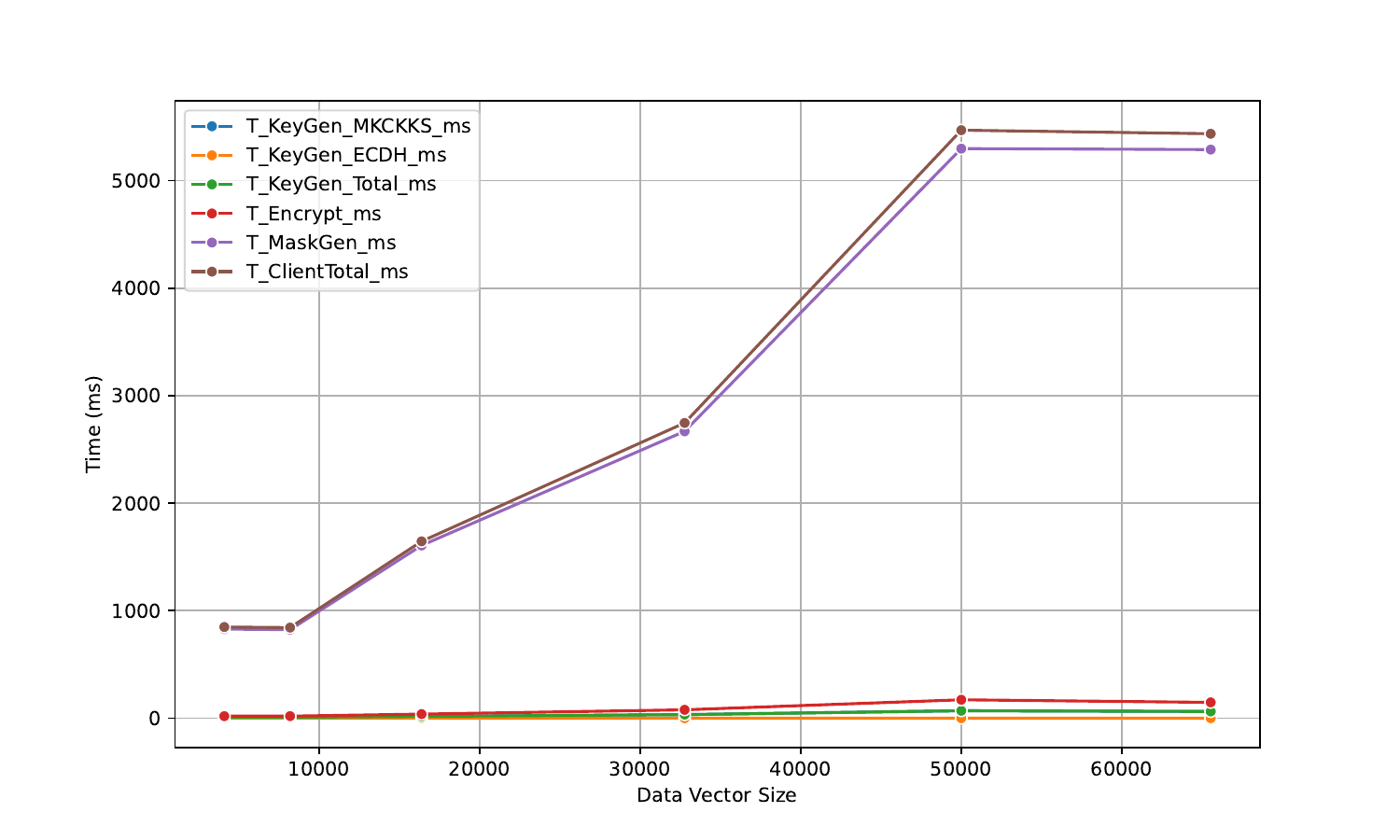}
\caption{Wall-clock running time per client vs. data vector size for all protocols. The number of clients is fixed at 500.}
\label{fig:client_time_vs_size}
\end{figure}

\textbf{Server-Side Computation:} The server's computational load, depicted in \textbf{Fig. \ref{fig:server_time_vs_clients}}, scales linearly with the number of clients $N$, as theoretically predicted. \textbf{Fig. \ref{fig:server_time_vs_size}} reveals a more complex, non-monotonic relationship between server time and data size. This behavior is also an artifact of the FHE parameterization.

\begin{figure}[htbp]
\centering
\includegraphics[width=\columnwidth]{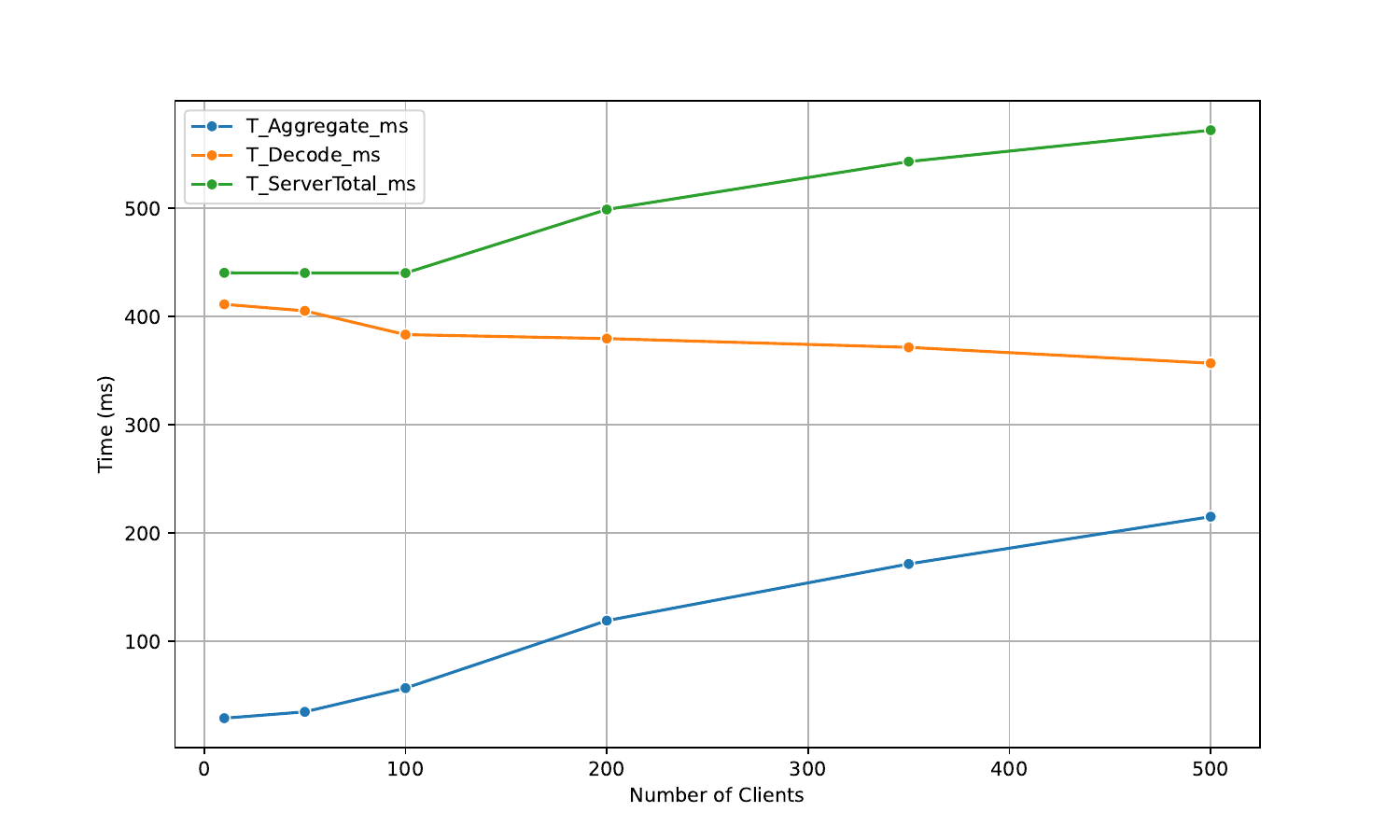}
\caption{Wall-clock running time for the server vs. number of clients for all protocols. The data vector size is fixed to 65,536.}
\label{fig:server_time_vs_clients}
\end{figure}

\begin{figure}[htbp]
\centering
\includegraphics[width=\columnwidth]{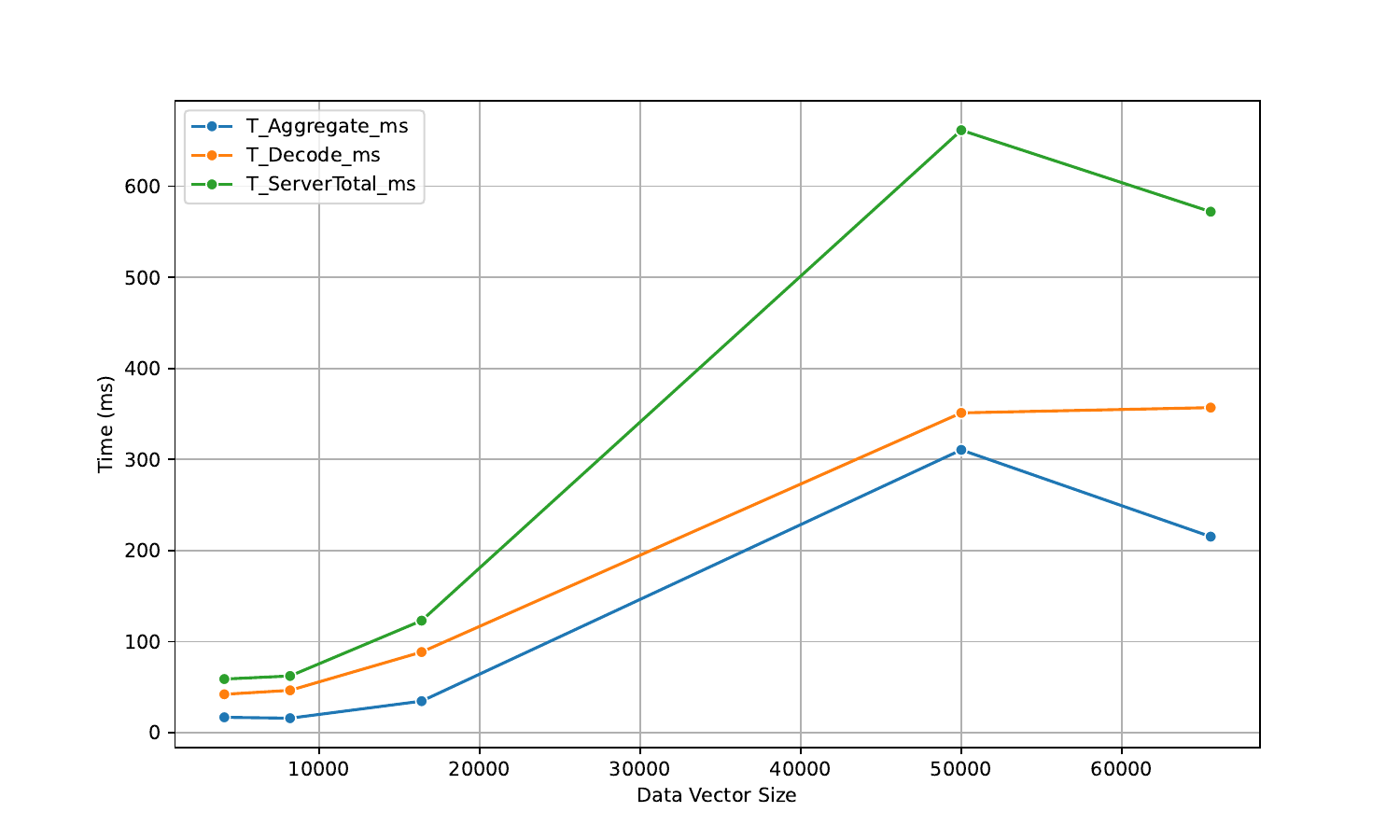}
\caption{Wall-clock running time for the server vs. data vector size for all protocols. The number of clients is fixed at 500.}
\label{fig:server_time_vs_size}
\end{figure}

\subsubsection{End-to-End Performance on a Resource-Constrained Device}
To directly validate the feasibility of Hyb-Agg in IoT environments, we executed the entire simulation on a Raspberry Pi 4. Our results for a representative workload are summarized in Table \ref{tab:rpi_results}.

\begin{table}[h]
\centering
\caption{Hyb-Agg Performance on Raspberry Pi 4 ($N$=50, $d$=8192)}
\label{tab:rpi_results}
\begin{tabular}{lc}
\toprule
\textbf{Metric} & \textbf{Measured Performance} \\
\midrule
Avg. Client Total Time & 431.8 ms \\
Server Total Time & 191.1 ms \\
Client Uplink per Round & 787 KB \\
Communication Expansion & $\approx$12x \\
\bottomrule
\end{tabular}
\end{table}

The data demonstrates the protocol's practicality on edge hardware. For a simulation of 50 clients with a data vector size of 8192, the average total time for a client to perform its key generation, encryption, and masking operations was approximately 431.8 ms. The server, also running on the Pi, completed its aggregation and decoding tasks in just 191.1 ms.

These results are highly encouraging. The sub-second execution times for both client and server workloads are well within practical limits for many real-world FL applications, particularly those where training rounds occur infrequently (e.g., once every few hours or daily). This experiment provides strong empirical evidence that Hyb-Agg is not merely a theoretical construct but a viable protocol for securing federated learning end-to-end on the types of resource-constrained devices that characterize the Internet of Things.

\subsection{Discussion and Comparative Analysis}
\subsubsection{Summary of Findings}
The empirical evaluation demonstrates that Hyb-Agg offers a compelling performance profile. Its primary advantage is the per-client communication cost, which is constant with respect to the number of clients, making it highly scalable. This is achieved at the cost of client-side computation that is linear in the number of clients ($O(N \cdot d)$) and a constant data expansion factor of approximately 12x. The protocol achieves practical exactness for aggregation, confirming its numerical integrity.

\subsubsection{Robustness to Client Dropouts}
The experiments presented assume a scenario with no client dropouts to establish a clear performance baseline. It is acknowledged that handling dropouts is critical for practical FL systems. The Hyb-Agg protocol can be readily extended to support this. The additive masking scheme is conceptually similar to the foundation of the protocol by Bonawitz et al.~\cite{bonawitz2017ccs}. Therefore, a robust dropout handling mechanism could be integrated by having clients share their pairwise mask seeds using Shamir's Secret Sharing (SSS)\cite{Shamir1979CACM}. This would introduce additional communication and computation for reconstruction, representing a well-understood engineering trade-off and a clear direction for future work.

\subsubsection{Conclusion on Performance Trade-offs}
In summary, Hyb-Agg carves out a specific and valuable niche in the design space of secure aggregation protocols. It makes a deliberate trade-off, prioritizing \textbf{communication scalability with client population} and \textbf{protocol simplicity} (a single, non-interactive round) by leveraging multi-key homomorphic encryption. The explicit costs are an increased \textbf{client-side computational complexity} and a \textbf{constant data expansion factor}. This profile makes Hyb-Agg an effective solution for secure aggregation in IoT environments. Our end-to-end simulation on a Raspberry Pi 4 substantiates this claim, demonstrating that while the computational load is significant, the absolute performance remains practical for many real-world edge deployment scenarios.

% \section*{Code Availability}
% The implementation of the Hyb-Agg protocol is publicly available at:
% \url{https://github.com/Emmaka9/secure_fl}.
\section{Conclusion}
This work introduced Hyb-Agg, a lightweight and communication-efficient secure aggregation protocol for federated learning in IoT environments. By integrating Multi-Key CKKS homomorphic encryption with ECDH-based additive masking, Hyb-Agg enables one-shot, non-interactive aggregation of model updates while ensuring confidentiality, integrity, and resistance to collusion under standard cryptographic assumptions.

Our theoretical analysis and empirical evaluation demonstrate that Hyb-Agg achieves constant per-client communication cost with respect to the number of participants, sub-second execution times even on resource-constrained devices such as the Raspberry Pi 4, and exact aggregate recovery despite the use of approximate homomorphic encryption. These properties make the protocol particularly well-suited for bandwidth-limited and heterogeneous IoT networks, where scalability and privacy must be balanced against strict performance constraints.

While the current design assumes full client participation and a trusted key setup phase, the protocol can be extended to support dropout resilience and verifiable setup without altering its fundamental one-shot nature. Future work will focus on optimizing mask generation, incorporating dynamic participation models, and evaluating the protocol in real-world wireless IoT deployments.
% The implementation of the Hyb-Agg protocol is publicly available at:
% \url{https://github.com/Emmaka9/secure_fl}.

\bibliographystyle{IEEEtran}
\bibliography{lib}

\end{document}